\newcommand{\mc}{\mathcal} \newcommand{\mf}{\mathfrak} \newcommand{\mb}{\mathbb} \newcommand{\on}{\operatorname} \renewcommand{\r}{\mathbf} \newcommand{\la}{\langle} \newcommand{\ra}{\rangle}
\newcommand{\w}[1]{\wedge^{\!#1}\,} \newcommand{\slot}{\;\cdot\;} 
\title{Y-algebroids and $E_{7(7)} \times \mb{R}^+$-generalised geometry}
\declaretheorem[name=Theorem, parent=section]{thm}
\declaretheorem[name=Definition, sibling=thm]{defn}
\declaretheorem[name=Proposition, sibling=thm]{prop}
\tikzset{->-/.style={decoration={
  markings,
  mark=at position #1 with {\arrow{stealth'}}},postaction={decorate}}}
\tikzset{proj_empty/.style={circle,fill=white,inner sep=0pt}}
\tikzset{proj0/.style={circle,fill=white,draw=black,inner sep=0pt}}
\tikzset{proj1/.style={circle,fill=black,draw=black,inner sep=0pt}}
\tikzset{proj2/.style={circle,fill=blue,draw=blue,inner sep=0pt}}
\tikzset{proj3/.style={circle,fill=gray,draw=gray,inner sep=0pt}}
\gdef\@fpheader{} \makeatother
\abstract{We define the notion of Y-algebroids, generalising the Lie, Courant, and exceptional algebroids that have been used to capture the local symmetry structure of type II string theory and M-theory compactifications to $D \geq 5$ dimensions. Instead of an invariant inner product, or its generalisation arising in exceptional algebroids, Y-algebroids are built around a specific type of tensor, denoted $Y$, that provides exactly the necessary properties to also describe compactifications to $D=4$ dimensions. We classify ``M-exact'' $E_7$-algebroids and show that this precisely matches the form of the generalised tangent space of $E_{7(7)} \times \mb{R}^+$-generalised geometry, with possible twists due to 1-, 4- and 7-form fluxes, corresponding physically to the derivative of the warp factor and the M-theory fluxes. We translate the notion of generalised Leibniz parallelisable spaces, relevant to consistent truncations, into this language, where they are mapped to so-called exceptional Manin pairs. We also show how to understand Poisson--Lie U-duality and exceptional complex structures using Y-algebroids.}
\author[a]{Ond\v rej Hulík,}
\author[b]{Emanuel Malek,}
\author[c]{Fridrich Valach,}
\author[c]{Daniel Waldram}
\affiliation[a]{Theoretische Natuurkunde, Vrije Universiteit Brussel\\ Pleinlaan 2, B-1050 Brussels, Belgium}
\affiliation[b]{Institut f\"{u}r Physik, Humboldt-Universit\"{a}t zu Berlin,\\
	IRIS Geb\"{a}ude, Zum Gro{\ss}en Windkanal 2, 12489 Berlin, Germany}
\affiliation[c]{Department of Physics, Imperial College London\\Prince Consort Road, London, SW7 2AZ, UK}
\emailAdd{ondra.hulik@gmail.com}
\emailAdd{emanuel.malek@physik.hu-berlin.de}
\emailAdd{fridrich.valach@gmail.com}
\emailAdd{d.waldram@imperial.ac.uk}
\begin{document}\maketitle

\section{Introduction}
Over the past few decades it has been understood that various aspects of string and M-theory
admit a convenient and elegant description in the language of generalised geometry. Notable examples include the symmetry algebra and dynamics of the massless NS-NS string sector in terms of ``$O(n,n)$-generalised geometry''~\cite{Hitchin,Gualtieri,CSCW0} (see also~\cite{Siegel,HHZ,JLP}) and of (the dimensional reductions of) the massless type II and M-theoretic sector in terms of ``$E_{n(n)}\times\mb R^+$-generalised geometry''~\cite{Hull,PW,CSCW} (otherwise known as ``exceptional generalised geometry''). The former constitutes one part of a larger world of Courant algebroids~\cite{LWX}, and this fact has been used to understand and extend the phenomenon of Poisson--Lie T-duality~\cite{KS,let,Scs}.
  
In order to develop an analogous extended version in the M-theoretic case, a class of geometric structures called exceptional algebroids was introduced and examined in~\cite{BHVW,BHVW2,HV}. This provided  a systematic geometrical construction and direct algebraic characterisation of all maximally supersymmetric consistent truncations to five dimensions and above, reformulating and slightly refining earlier results of Inverso~\cite{I}. It also gives new insights into the nature of Poisson--Lie U-duality~\cite{Sakatani,MT,MST}. Furthermore, it was shown that both Courant and exceptional algebroids, as well as the more well-known Lie algebroids, form specific subclasses of a more general class of structures, called $G$-algebroids. The letter $G$ here represents a choice of particular group data --- for instance, in the Lie, Courant, and exceptional case this data corresponds to the groups $GL(n,\mb R)$, $O(p,q)$, and $E_{n(n)}\times\mb R^+$, respectively. However, somewhat unsatisfactorily, exceptional algebroids (or more generally $G$-algebroids) only provide a description of the exceptional generalised geometry in the cases $n<7$.

The reason for this is as follows. The central ingredient in the definition of a $G$-algebroid is a ``bracket'', which encodes the algebra of symmetries of the related theory. Although by construction the infinitesimal symmetries themselves form a Lie algebra, it is more convenient to parametrise them in terms of sections of some vector bundle. The price to pay for this transition is that the corresponding bracket ceases (in general) to be antisymmetric, and only satisfies the Leibniz identity, a generalisation of the Jacobi identity to non-antisymmetric brackets. The symmetric part of this bracket usually has a very concrete form, written in terms of an inner product (in the $O(p,q)$ case) or its generalisation (in the exceptional case for $n<7$). Concretely, denoting the vector bundle describing the infinitesimal symmetries by $E$, there exists another vector bundle $N$, a vector bundle map $\langle\slot,\slot\rangle\colon E\otimes E\to N$, and a differential operator $\mc D$ from sections of $N$ to sections of $E$, such that for every section $u$ of $E$ we have
  \[[u,u]=\tfrac12\mc D\langle u,u\rangle.\]
  Although this condition was first encountered in the case of Courant algebroids (where $N$ is the trivial line bundle), it also holds in the other cases, notably in exceptional generalised geometry for $n<7$. It, however, ceases to hold in the cases $n\ge 7$.
  
  It is thus clear that in order to incorporate exceptional groups of higher rank, a somewhat radical shift of perspective is needed. Importantly, one notes that in the cases $n\geq8$ one expects the classical (Leibniz) algebroid framework to break down completely, due to the appearance of the dual graviton (c.f.\ the works~\cite{HS-E8,CR-E8} where various approaches are used to bypass this problem). The remaining case is the one of $n=7$, where the effect of the dual graviton is much milder. 
  
  In this work we abandon the inner product (or its generalisation) completely and study a new class of structures, which rely on a particular type of tensor, typically denoted $Y$ in the literature following~\cite{BCKT}. For this reason we call the resulting structures Y-algebroids. Crucially, we show that they have precisely the desired properties that are needed for the $n=7$ geometry.
  
  The fact that the algebra of symmetries can be concisely described via a bracket featuring $Y$ (as in~\eqref{eq:general}) has been known for some time~\cite{CSCW0,BCKT}. One of the points of the present work is to show that this expression can be regarded as a simple consequence of a more innocent-looking definition of a Y-algebroid. In addition, this approach gives a useful new handle on studying the appearance of fluxes, the phenomenon of Poisson--Lie U-duality, and the algebraic structure of maximally supersymmetric consistent truncations.

  The paper is organised as follows. We start with a short introduction into the general framework of Leibniz algebroids, and define and discuss basic concepts such as generalised Lie derivatives and connections. After a short linear algebra excursion we define Y-algebroids and their classes, and we show how they recover Lie algebroids, (generalisations of) Courant algebroids, and the previously defined exceptional algebroids for ranks $n<7$. We then study the $n=7$ case in more detail, and prove the basic structural results, a local classification theorem, and the algebraic characterisation of maximally supersymmetric consistent truncations (in terms of exceptional Manin pairs). We also touch lightly on the global classification issue. We finish with the discussion of the Poisson--Lie U-duality, look at some explicit examples, and take some steps connecting to exceptional complex geometry.
  
\section{Leibniz algebroids}
      We start by recalling the definition of a Leibniz algebroid. Although these structures are too general for our purposes here, they provide a useful starting point for defining and understanding notions such as generalised Lie derivatives and connections.
      
      A \emph{Leibniz algebroid} consists of a vector bundle $E\to M$, a vector bundle map $\rho\colon E\to TM$ (called the \emph{anchor}),\footnote{Vector bundle map means that at every point $m \in M$ we have a linear map on the fibres, $E_m\to T_mM$. This in particular induces a map (also denoted $\rho$) on the sections $\Gamma(E)\to \Gamma(TM)$, satisfying $\rho(fu)=f\rho(u)$ for any $f\in C^\infty(M)$ and $u\in\Gamma(E)$.} and an $\mb R$-bilinear bracket on the space of sections of $E$, \[[\slot,\slot]\colon \Gamma(E)\times\Gamma(E)\to\Gamma(E),\]
      such that for all $u,v,w\in\Gamma(E)$ and $f\in C^\infty(M)$ we have
      \[[u,[v,w]]=[[u,v],w]+[v,[u,w]],\qquad [u,fv]=f[u,v]+(\rho(u)f)v.\]
      The first condition is known as the \emph{Leibniz identity}.
      
      The simplest example is given by $E=TM$ with $\rho=\on{id}$ and $[\slot,\slot]$ the Lie bracket (commutator) of vector fields.
      It is often useful to regard Leibniz algebroids as a generalisation of this particular example.
      
      One simple consequence of the definition of a Leibniz algebroid is the fact that the anchor intertwines the brackets,
      \[\rho([u,v])=[\rho(u),\rho(v)].\]

      The operation $[u,\slot]$ extends naturally to a \emph{generalised Lie derivative} operator $L_u$ acting on tensors on $E$, that is, sections of \[E^*\otimes\dots \otimes E^*\otimes E\otimes\dots\otimes E.\] For instance, for $f\in C^\infty(M)$ and $\xi\in \Gamma(E^*)$ we have
    \[L_u f=\rho(u)f,\qquad \langle L_u \xi,v\ra=\rho(u)\la \xi,v\ra-\la \xi,[u,v]\ra,\]
    where $\la\slot,\slot\ra$ is the pairing between $E^*$ and $E$. A tensor $t$ on a Leibniz algebroid is called \emph{invariant} if $L_u t=0$ for all $u\in\Gamma(E)$. 
    
      A \emph{connection} on a Leibniz algebroid $E$ is an $\mb R$-bilinear map $\nabla\colon\Gamma(E)\times\Gamma(E)\to\Gamma(E)$, denoted $u\otimes v\mapsto \nabla_uv$, such that for all $u,v\in\Gamma(E)$ and $f\in C^\infty(M)$ we have
      \begin{equation}\label{eq:connection}
        \nabla_{fu}v=f\nabla_uv,\qquad \nabla_u (fv)=f\nabla_uv+(\rho(u)f)v.
      \end{equation}
    Again, $\nabla_u$ extends to an action on all tensors in $E$.
    Note that given a conventional vector bundle connection $D\colon\Gamma(TM)\times\Gamma(E)\to\Gamma(E)$ one can always define an associated algebroid connection $\nabla^{(D)}$ via $\nabla^{(D)}_uv:=D_{\rho(u)}v$ for all $u,v\in\Gamma(E)$. 
    A connection $\nabla$ is called \emph{$t$-compatible} if $\nabla t=0$.

      Finally, we will use $\rho^*\colon T^*M\to E^*$ to denote the transpose/dual map of the anchor, i.e.
        \[\la\rho^*(\alpha),u\ra=\la \alpha,\rho(u)\ra\qquad u\in \Gamma(E), \alpha\in\Omega^1(M).\]
  
  \section{Y-algebroids}
  Let us now proceed to the main protagonist of the present article, the \emph{Y-algebroid}. We first give the definition and then discuss some examples.

  \subsection{Definition}

  We start with a brief discussion of some linear algebra.
    We will be interested in studying pairs $(R,y)$ of a real vector space $R$ and a linear map \[y\colon R^*\otimes R\to R^*\otimes R,\] seen as a tensor on $R$. Two simple examples are given by $y=0$ and by taking the transpose of an endomorphism (w.r.t.\ some fixed inner product). It is shown below that these correspond to Lie and (a generalisation of) Courant algebroids, respectively.
    For any $(R,y)$ we define $\on{Aut}(y)$ as the subgroup of $GL(R)$ which preserves $y$.
    If we choose a basis $e_\alpha$ of $R$, we can denote the components of $y$ by
      \[y^{\alpha\beta}_{\,\gamma\delta}:=\la y(e^\alpha\otimes e_\gamma),e_\delta\otimes e^\beta\ra.\]
    Finally, we will say that a subspace $V\subset R$ is \emph{coisotropic} if
    \[y^{\;\alpha\beta}_{(\gamma\delta)}\xi_\alpha\zeta_\beta=0\quad \xi,\zeta\in V^\circ,\]
    where $V^\circ:=\{\xi\in R^* : \xi|_V=0\}\subset R^*$ is the annihilator of $V$.

\begin{defn}
  A \emph{Y-algebroid} is a Leibniz algebroid $(E,[\slot,\slot],\rho)$, together with an invariant tensor \[Y\colon E^*\otimes E\to E^*\otimes E,\] such that there exists a $Y$-compatible connection $\nabla$ satisfying 
  \begin{equation}\label{eq:bkt}
    [u,u]=Y(\nabla u)u,\qquad \forall u\in\Gamma(E).
  \end{equation}
\end{defn}

Note that, dropping the invariance and compatibility requirements in this definition, one recovers the \emph{anti-commutable Leibniz algebroid} of~\cite{DD} (for related previous works see \cite{GKP,JV}).
Notice also that the data necessary to define a Y-algebroid consists only of a Leibniz algebroid and a particular tensor $Y$, and there is typically no concrete or natural connection associated with such an algebroid.
      
We will be interested in cases where $Y$ ``is the same'' everywhere on the manifold. To make this formal, let $R$ be a vector space and $y\colon R^*\otimes R\to R^*\otimes R$ a fixed tensor. A Y-algebroid is said to be of \emph{class} $(R,y)$ if there is a linear isomorphism from each fiber to $R$ such that $Y$ is mapped to $y$. Note that looking at all possible identifications of the fibres of $E$ with $R$ which carry $Y$ to $y$ we obtain a principal $\on{Aut}(y)$-bundle.\footnote{Here we assume $\on{Aut}(y)\subset GL(R)$ is closed.}
        
\begin{prop}
For any Y-algebroid the kernel of $\rho$ is coisotropic at every point. Equivalently,
\begin{equation}\label{eq:seq}
  T^*M\otimes E\otimes E\xrightarrow{\check\rho} E\xrightarrow{\rho} TM\to 0, 
\end{equation}
with $\check\rho(\xi\otimes u\otimes v):=Y(\rho^*\xi\otimes u)v+Y(\rho^*\xi\otimes v)u$, is a chain complex.
\end{prop}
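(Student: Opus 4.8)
The plan is to reduce the statement to the single pointwise identity $\rho\circ\check\rho=0$ and to extract the latter from the defining relation \eqref{eq:bkt} by polarisation.

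First I would dispose of the equivalence of the two formulations. At a point $m$ one has $\on{im}\rho^*_m=(\ker\rho_m)^\circ$, since the image of a transpose is the annihilator of the kernel. Writing $\check\rho(\mu\otimes u\otimes v)$ in components gives $2\,(\rho^*\mu)_\alpha u^\gamma v^\delta\,Y^{\alpha\beta}_{(\gamma\delta)}$ for its $\beta$-component, and after applying $\rho$ the free index $\beta$ is contracted with $\rho^b_{\ \beta}=(\rho^*dx^b)_\beta$; hence $\rho\circ\check\rho=0$ at $m$ unwinds precisely to $Y^{\alpha\beta}_{(\gamma\delta)}\xi_\alpha\zeta_\beta=0$ for all $\xi,\zeta\in\on{im}\rho^*_m=(\ker\rho_m)^\circ$, i.e.\ to the coisotropy of $\ker\rho_m$. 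Together with the fact that the composite $E\to TM\to0$ is trivially zero, this identifies the two statements in the proposition and reduces the task to proving $\rho\circ\check\rho=0$.

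The core step is to polarise \eqref{eq:bkt}. Replacing $u$ by $u+v$ and using $\mb R$-bilinearity of $\nabla$ together with linearity of $Y$ yields, for all $u,v\in\Gamma(E)$,
\[[u,v]+[v,u]=Y(\nabla u)v+Y(\nabla v)u,\]
where $\nabla u$ denotes the section $w\mapsto\nabla_w u$ of $E^*\otimes E$. Applying the anchor, using that it intertwines the brackets and that the Lie bracket of vector fields is antisymmetric, the left-hand side maps to $[\rho(u),\rho(v)]+[\rho(v),\rho(u)]=0$, so
\[\rho\bigl(Y(\nabla u)v+Y(\nabla v)u\bigr)=0\qquad\forall\,u,v\in\Gamma(E).\]
Since $Y$ and $\rho$ are tensorial, this may be evaluated at any $m$, the sections entering only through their values and first covariant derivatives there.

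Finally I would localise. Fix $m$, $\mu\in T^*_mM$, $X,Z\in E_m$, and choose the section $v:=fv'$ where $f\in C^\infty(M)$ has $f(m)=0$, $df|_m=\mu$, and $v'\in\Gamma(E)$ has $v'(m)=X$; then $v|_m=0$ and, by the Leibniz rule \eqref{eq:connection}, $(\nabla v)|_m=\rho^*\mu\otimes X$ as an endomorphism of $E_m$. Taking in addition any section $u$ with $u|_m=Z$, the term $Y((\nabla u)|_m)(v|_m)$ drops out because $v|_m=0$, so the displayed identity collapses to $\rho\bigl(Y(\rho^*\mu\otimes X)\,Z\bigr)=0$. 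As $\mu,X,Z$ are arbitrary, $\rho(\check\rho(\mu\otimes u\otimes v))=\rho\bigl(Y(\rho^*\mu\otimes u)v\bigr)+\rho\bigl(Y(\rho^*\mu\otimes v)u\bigr)=0$, which is what we want. The only slightly delicate point --- and the crux of the argument --- is that the Leibniz rule for $\nabla$ forces $(\nabla v)|_m$, for $v$ vanishing at $m$, to run over all endomorphisms $\rho^*\mu\otimes X$; this is where the anchor enters and why a ``section condition''-type statement appears. (Note that neither the invariance of $Y$ nor the $Y$-compatibility of $\nabla$ is needed here.)
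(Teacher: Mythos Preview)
Your argument is correct and is essentially the paper's own: apply $\rho$ to the defining relation \eqref{eq:bkt}, use the Leibniz rule for $\nabla$ to isolate the $\rho^*df$ contribution, and translate to coisotropy via $\on{im}\rho^*=(\ker\rho)^\circ$. The packaging differs slightly: the paper substitutes $u\mapsto fu$ directly into $\rho(Y(\nabla u)u)=0$ and subtracts the $f^2$ piece to obtain $\rho\check\rho(df\otimes u\otimes u)=0$, then implicitly polarises in $u$; you instead polarise first and then localise with a section $v$ vanishing at the point. A small bonus of your ordering is that it yields the \emph{unsymmetrised} statement $\rho\bigl(Y(\rho^*\mu\otimes X)Z\bigr)=0$ directly, whereas the paper's route only gives the symmetrised version and later invokes the algebraic fact \eqref{eq:implies} when the stronger form is needed.
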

\begin{proof}
Applying $\rho$ to $[u,u]=Y(\nabla u)u$ we get $0=\rho(Y(\nabla u)u)$ for any $u\in\Gamma(E)$. In particular, for any $f\in C^\infty(M)$ we have
$0=\rho(Y(\nabla fu)fu)-f^2\rho(Y(\nabla u)u)=f\rho(Y(\rho^*d f\otimes u)u)$, which implies that $\rho\check\rho(df\otimes u\otimes u)=0$. Since any element of the fibre $T^*_mM$ can be extended to an exact 1-form, it follows that \eqref{eq:seq} is a chain complex. Looking at a fibre of a fixed point and noticing that
$\on{im}\rho^*=(\ker\rho)^\circ$,
we can rewrite $\rho(Y(\on{im}\rho^*\otimes u)u)=0$ as $\la Y((\ker\rho)^\circ\otimes u)u,(\ker\rho)^\circ\ra=0$, which finishes the proof.
\end{proof}
      Importantly, if \eqref{eq:seq} is an exact sequence, we say that the Y-algebroid is \emph{exact}. As seen below, it is precisely exact Y-algebroids that provide a natural setup for many string and M-theoretic applications. However, when dealing with Poisson--Lie-type dualities, it is crucial to regard exact Y-algebroids as a very special sector of a larger class of all Y-algebroids.\footnote{or at least of a larger class of \emph{transitive} Y-algebroids, i.e.\ those with $\rho$ surjective} This has been understood originally in the works of \v Severa~\cite{let,Scs}, which describe Poisson--Lie T-duality~\cite{KS} using (exact as well as non-exact) Courant algebroids.
      
    \subsection{Group data and \texorpdfstring{$G$}{G}-algebroids}
      The pairs $(R,y)$ that arise in string and M-theory come from group representations via the following construction:
      Suppose $R$ is a faithful representation of a Lie group $G$ and \[\pi\colon R^*\otimes R\to R^*\otimes R\] is an equivariant map with $\on{im}\pi=\mf g$ in the representation $R$. We then set $y:=\on{id}-\pi$. 
      
      Although by construction we always have $G\subset \on{Aut}(y)$, the two groups are in general not equal, and indeed they will not be for the cases of interest to use (see below). For example, for the $E_{7(7)} \times \mb R^+$ case,
      they differ slightly in their global structure:
      \ $E_{7(7)}\times \mb R^+$ vs $(\mb Z_2\ltimes E_{7(7)})\times \mb R^+$.
      
      On a related note, it is instructive to compare Y-algebroids with the previously defined $G$-algebroids~\cite{BHVW}. As shown below, these two notions essentially coincide in the case of exceptional groups of low rank (up to 6). Interestingly, not only do Y-algebroids allow one also to capture the case $n=7$, but their definition seems simpler. This is  due to the fact that it only relies on the notion of the tensor $Y$ (or $y$) --- one then recovers the group by studying maps preserving this tensor. The interaction between the bracket and the group is then ensured by the requirement that $Y$ is invariant.

\subsection{Examples}

\subsubsection*{(a) Lie algebroids}
Take $R=\mb R^n$ and $G=GL(n,\mb R)$, with $\pi=\on{id}$. We have $y=0$, $\on{Aut}(y)=GL(n,\mb R)$, and the corresponding algebroid has $[u,u]=0$. Y-algebroids of class $(\mb R^n,0)$ thus coincide with Lie algebroids of rank $n$. Furthermore, exact Y-algebroids of this class correspond to tangent Lie algebroids $E=TM$ with $\dim M=n$.

\subsubsection*{(b) Beyond Courant algebroids}
Take $R=\mb R^n$ to be the vector space with the inner product $\eta$ of signature $(p,n-p)$, forming the vector representation of $G=O(p,n-p)$, with $\pi(A)=A-A^T$. This implies $y(A)=A^T$, which we can write as \[y=\eta\otimes\eta^{-1}\in R^*\otimes R^*\otimes R\otimes R.\] In this case $\on{Aut}(y)$ is the group
\[\{A\in GL(n,\mb R)\mid A^T\!\eta A=\lambda\eta\text{ for some }\lambda\in\mb R\}.\]
Note that by Sylvester's law, $\lambda<0$ is only possible if $\eta$ has split signature. Thus
\[\on{Aut}(y)=\begin{cases}(\mb Z_2\ltimes O(\tfrac n2,\tfrac n2))\times \mb R^+ & \text{if $\eta$ has split signature,}\\O(p,n-p)\times \mb R^+ & \text{if not.}\end{cases}\]
To define the action of the $\mb Z_2$ factor, choose a basis $\{e_i^+,e_i^-\}$ such that $\eta=\sum_i (e^+_i\otimes e^+_i-e^-_i\otimes e^-_i)$. The group $\mb Z_2$ is generated by the flip $e^\pm_i\mapsto e^\mp_i$. Note that $A=-\on{id}$ is contained within $O(p,n-p)$. Finally, notice that if we see $y$ as a map \[\hat y\colon R\otimes R\to R\otimes R,\] then $\dim(\on{im}\hat y)=1$.
          
For any Y-algebroid of class $(\mb R^n,\eta\otimes\eta^{-1})$ we thus obtain a line bundle $N:=\on{im}\hat Y\subset E\otimes E$, and $\hat Y$ can be seen as a line-bundle-valued inner product $(\cdot,\cdot)\colon E\otimes E\to N$. One easily checks that, when acting on $\Gamma(N)$, the generalised Lie derivative satisfies $L_{fu}=fL_u$ for all $f\in C^\infty(M)$, $u\in\Gamma(E)$, i.e.\ it defines a (Leibniz algebroid) connection $\nabla^N$ on $N$. The invariance of $Y$ then means
\[\nabla^N_u( v,w)=([u,v],w)+(v,[u,w]).\]
On the other hand, the condition \eqref{eq:bkt} translates to
\[[u,u]=\tfrac12\on{Tr}\nabla^N(u,u),\]
where $\on{Tr}$ is understood as the contraction on $E^*\otimes N\subset E^*\otimes E\otimes E$.
          
Y-algebroids of this class are thus closely related (though not equivalent) to the structures introduced in~\cite{CLS},~\cite{B},~\cite{GS}, which generalise Courant algebroids~\cite{LWX}.

\subsubsection*{(c) \texorpdfstring{$E_n$}{En}-algebroids}
We now restrict our attention to the groups $E_{n(n)}$ and representations which appear in exceptional generalised geometry. For the sake of simplicity we focus on the case $n>3$, while for conceptual reasons (discussed later in the text) we consider the upper bound $n<8$. $E_{n(n)}$ are real Lie groups whose Lie algebras are split real forms of the corresponding complex simple Lie algebras.\footnote{or, in the case of lower rank, ``extrapolations'' thereof, via the Dynkin diagram pattern} We list them below, together with their Dynkin diagrams, and a chosen representation $R$, which is always fundamental, corresponding to the marked black node.
\begin{center}
  \begin{tabular}{cccc}
    $\begin{tikzpicture}[scale=.7, baseline=-0.6ex, thick]
      \draw (0,0) -- (5,0); \draw (3,1) -- (3,0); \node at (3,1) [proj0] {\phantom{+}}; \node at (0,0) [proj1] {\phantom{+}}; \node at (5,0) [proj0] {\phantom{+}};
      \foreach \x in {1,...,4}
      \node at (\x,0) [proj0] {\phantom{+}};
    \end{tikzpicture}$ &
    $\begin{tikzpicture}[scale=.7, baseline=-0.6ex, thick]
          \draw (0,0) -- (4,0); \draw (2,1) -- (2,0); \node at (2,1) [proj0] {\phantom{+}}; \node at (0,0) [proj1] {\phantom{+}}; \node at (4,0) [proj0] {\phantom{+}};
          \foreach \x in {1,...,3}
              \node at (\x,0) [proj0] {\phantom{+}};
          \end{tikzpicture}$ &
    $\begin{tikzpicture}[scale=.7, baseline=-0.6ex, thick]
          \draw (0,0) -- (3,0); \draw (1,1) -- (1,0); \node at (1,1) [proj0] {\phantom{+}}; \node at (0,0) [proj1] {\phantom{+}}; \node at (3,0) [proj0] {\phantom{+}};
          \foreach \x in {1,...,2}
              \node at (\x,0) [proj0] {\phantom{+}};
            \end{tikzpicture}$ &
    $\begin{tikzpicture}[scale=.7, baseline=-0.6ex, thick]
          \draw (0,0) -- (2,0); \draw (0,1) -- (0,0); \node at (0,1) [proj0] {\phantom{+}}; \node at (0,0) [proj1] {\phantom{+}}; \node at (2,0) [proj0] {\phantom{+}};
          \foreach \x in {1,...,1}
              \node at (\x,0) [proj0] {\phantom{+}};
          \end{tikzpicture}$ \\ [.2cm]
    $E_{7(7)}$ & $E_{6(6)}$ & $E_{5(5)}=Spin(5,5)$ & $E_{4(4)}=SL(5,\mb R)$
          \\ [.2cm]
    $\mathbf{56}$ & $\mathbf{27}$ & $\mathbf{16}$ & $\mathbf{10}$
  \end{tabular}
\end{center}
As is usual, we extend the group $E_{n(n)}$ to $G:=E_{n(n)}\times\mb R^+$, with the second factor acting on $R$ with weight 1.
In each case, we take the projection onto the Lie algebra~\cite{CSCW}
\begin{equation}\label{eq:pi}
  \pi=\alpha\,\pi_{\text{ad}}-\beta\,\pi_{\mb R},
\end{equation}
where $\pi_{\text{ad}}$ and $\pi_{\mb R}$ are the projections onto $\mf e_{n(n)}$ and $\mb R$, respectively, in the representation $R$, $\alpha$ is twice the Dynkin index of $R$, and $\beta=\frac{\dim R}{9-n}$. Specifically one has
\begin{center}
  \begin{tabular}{@{}ccccc@{}}
    \toprule
    $n$ & 7 & 6 & 5 & 4 \\
    \midrule
    $\alpha$ & 12 & 6 & 4 & 3 \\
    $\beta$ & 28 & 9 & 4 & 2 \\
    \bottomrule
  \end{tabular}
\end{center}
One can derive \eqref{eq:pi} from an even simpler expression \cite{BCKPS,CP}
\[\pi=t^a\otimes t_a+[1-(\Lambda,\Lambda)]1\otimes 1,\]
where $t_a$ are the generators of $\mf e_{n(n)}$, $\Lambda$ is the dominant integral weight associated to $R$, and we used the normalisation of the Killing form under which the long roots have length $2$.

An \emph{$E_n$-algebroid} is short for a Y-algebroid of class $(R,y)$ for these $R$ and $y$ (both depending on $n$).
          
Let us now show how this notion relates to the previously defined exceptional algebroids~\cite{BHVW} for $4\le n\le 6$. More concretely, we show that locally the two constructions coincide.\footnote{Note that there are local classification results~\cite{BHVW,BHVW2,HV} related to $G$-algebroids. Because of the correspondence between the two notions, we automatically obtain analogous local classifications of $E_n$-algebroids for $n<7$.}
First, let us show that
\begin{equation}\label{eq:autos}
  \mf{aut}(y)\cong \mf e_{n(n)}\oplus \mb R.
\end{equation}
In order to do this, note that any element $A\in \on{Aut}(y)\subset GL(R)$ preserves the image of $\pi =\on{id}-y$ in $R^*\otimes R$, i.e.\ the subspace $\mf{e}_{n(n)}\oplus\mb R\subset R^*\otimes R$. In other words,
\[A(\mf{e}_{n(n)}\oplus\mb R) A^{-1}\subset \mf{e}_{n(n)}\oplus\mb R.\]
Since $A[x,y]A^{-1}=[AxA^{-1},AyA^{-1}]$ and conjugation acts trivially on the second summand of $\mf{e}_{n(n)}\oplus\mb R$, the assignment $A\mapsto A(\slot) A^{-1}$ gives a map to Lie algebra automorphisms,
\[\on{Aut}(y)\to \on{Aut}(\mf e_{n(n)}).\]
Schur's lemma implies that the kernel of this map is $\mb R^*$. Since $\on{Lie}(\on{Aut}(\mf e_{n(n)}))\cong \mf e_{n(n)}$ (due to simplicity), we obtain the identification \eqref{eq:autos}. Notice however that the relation between the Lie groups $\on{Aut}(y)$ and $E_{n(n)}\times\mb R^+$ is more complicated due to the presence of outer automorphisms (note that the Dynkin diagrams for $n\in \{4,5,6\}$ possess a $\mb Z_2$ symmetry).
          
Second, for $n<7$, $y$ is symmetric both in its upper and lower indices, similar to the case of conformal Courant algebroids. Furthermore, defining $\hat y\colon R\otimes R\to R\otimes R$ to be its partial dual, we have that $R':=\on{im}\hat y\subset R\otimes R$ is a particular subrepresentation:
\begin{center}
  \begin{tabular}{@{}cccc@{}}
    \toprule
    $n$ & 6 & 5 & 4\\
    \midrule
    $R'$ & $\mathbf{27}'$ & $\mathbf{10}$ & $\mathbf{5}'$ \\
    \bottomrule
  \end{tabular}
\end{center}
If now $E$ is an $E_n$-algebroid for $n<7$, and we define the subbundles
\[N:=\on{im}\hat Y\subset E\otimes E,\qquad N^*:=\on{im}\hat Y^*\subset E^*\otimes E^*,\]
we get symmetric maps $E\otimes E\to N$ and $E^*\otimes E^*\to N^*$ (or equivalently $N\to E\otimes E$). Then
\[Y(\nabla u)u=\tfrac12\on{Tr}\nabla(\hat Y(u\otimes u)),\]
where again $\on{Tr}$ denotes the contraction on $E^*\otimes N\subset E^*\otimes E\otimes E$. We thus obtain an exceptional algebroid~\cite{BHVW}, with $\mc Dn=\on{Tr}\nabla n$.\footnote{The principal bundle, which is required in the definition of an exceptional algebroid, corresponds locally to the $\on{Aut}(y)$-principal bundle which we obtain from the Y-algebroid $E$. Furthermore, $E$ can be seen as the associated bundle for the representation $R$.}
Conversely, starting with an exceptional algebroid we define $\hat Y$ to be the composition $E\otimes E\to N\to E\otimes E$. Similarly, it is not difficult to see that one can always extend $\mc D$ to a $Y$-compatible connection $\nabla$ such that $\on{Tr}\nabla n=\mc Dn$. This establishes the correspondence between the two notions.

\subsubsection*{(d) Other generalised geometries}
There are also further interesting classes of geometries. Two of those arise as a straightforward generalisation of the cases $E_{4(4)}\times\mb R^+$ and $E_{5(5)}\times\mb R^+$, and are given by the following diagrams:
\begin{center}
  $\begin{tikzpicture}[scale=.7, baseline=-0.6ex, thick]
    \draw (0,0) -- (2.6,0); \draw (3.4,0) -- (4,0); \draw (0,1) -- (0,0); \node at (0,1) [proj0] {\phantom{+}}; \node at (0,0) [proj1] {\phantom{+}}; \node at (2,0) [proj0] {\phantom{+}}; \node at (4,0) [proj0] {\phantom{+}};
    \foreach \x in {1,...,1}
    \node at (\x,0) [proj0] {\phantom{+}};
    \filldraw [black] (2.8,-.2) circle (.5pt); \filldraw [black] (3,-.2) circle (.5pt); \filldraw [black] (3.2,-.2) circle (.5pt);
  \end{tikzpicture}$\hspace{1.5cm}
  $\begin{tikzpicture}[scale=.7, baseline=-0.6ex, thick]
    \draw (0,0) -- (2.6,0); \draw (3.4,0) -- (4,0); \draw (1,1) -- (1,0); \node at (1,1) [proj0] {\phantom{+}}; \node at (0,0) [proj1] {\phantom{+}}; \node at (2,0) [proj0] {\phantom{+}}; \node at (4,0) [proj0] {\phantom{+}};
    \foreach \x in {1,...,1}
    \node at (\x,0) [proj0] {\phantom{+}};
    \filldraw [black] (2.8,-.2) circle (.5pt); \filldraw [black] (3,-.2) circle (.5pt); \filldraw [black] (3.2,-.2) circle (.5pt);
  \end{tikzpicture}$
\end{center}
The first one is given by the two-form representation of $SL(n,\mb R)\times\mb R^+$~\cite{PS,LSCW} and plays an important role in the description of Freund--Rubin compactifications and their consistent truncations~\cite{LSCW,HS,EGM}.
The second one, introduced and studied in~\cite{SC}, corresponds to the spinor representation of $Spin(n,n)\times\mb R^+$.

Another intriguing class of geometries \cite{BCKPSS} with non-split real forms of Lie groups arises from the study of magical supergravities \cite{GST,GST2,GSS}. This in particular includes the so-called \emph{type $\mf e_7$} setups, describing reductions to 4 dimensions. The relevant groups and representations in this case are
\begin{equation}\label{eq:non_split}
    (Sp(6,\mb R)\times \mb R^+,\r{14}),\quad (SU(3,3)\times \mb R^+,\r{20}),\quad (SO^*(12)\times \mb R^+,\r{32}),\quad (E_{7(-25)}\times \mb R^+,\r{56}).
\end{equation}
Similarly to the above $E_{7(7)}\times\mb R^+$-case, the tensor $y$ is not symmetric in the upper/lower indices. Hence the corresponding geometries do not admit a $G$-algebroid description. However, just as the $E_{7(7)}\times\mb R^+$ case, they fit well in the framework of Y-algebroids.

\section{The \texorpdfstring{$n=7$}{n=7} case}
    Let us now turn to the main focus of the paper, namely the application of Y-algebroids in the case of $E_{7(7)}$.

\subsection{General M-exact \texorpdfstring{$E_7$}{E7}-algebroids}
\label{subsec:M-exact}    

Let us start with some algebraic preliminaries, used to define the group $E_{7(7)}$ and the automorphisms of the $y$ tensor. Set $R:=\w2\mb R^8\oplus \w2\!(\mb R^8)^*$. 
One can then define the following symplectic and quartic forms on $R$:
\[
  \omega((x,y),(x',y'))=\on{Tr} xy'-\on{Tr}x'y, \qquad
  q(x,y)=\on{Tr}(xy)^2-\tfrac14(\on{Tr}xy)^2+4(\on{Pf}x+\on{Pf}y).
\]
The group $E_{7(7)}$ is defined as the subgroup of $GL(R)$ which preserves $\omega$ and $q$.

The $y$ tensor takes the form \cite{BCKT}:
\begin{equation}\label{eq:y}
  y^{\alpha\beta}_{\,\gamma\delta}=12\,\omega^{\alpha\epsilon}\omega^{\beta\zeta}q_{\epsilon\zeta\gamma\delta}+\delta^{(\alpha}_\gamma \delta^{\beta)}_\delta+\tfrac12(\omega^{-1})^{\alpha\beta}\omega_{\gamma\delta}.
\end{equation}
The subgroup $\on{Aut}(y)\subset GL(R)$ which preserves $y$ is, however, slightly larger than $E_{7(7)}\times\mb R^+$, and can be found via the following argument. First, note that $\on{Aut}(y)$ has to preserve separately the symmetric and antisymmetric (in pairs of lower and upper indices) parts of $y$. The antisymmetric part is proportional to $\omega\otimes\omega^{-1}$, which implies that $\on{Aut}(y)$ is a subgroup of conformal symplectic transformations
\[
  \{A\in GL(R)\mid A^T \omega A=\lambda \omega \text{ for some }\lambda\in\mb R\}\cong (\mb Z_2\ltimes Sp(56,\mb R))\times\mb R^+. 
\]
The $\mb Z_2$ factor here can be identified with the subgroup of $GL(R)$ generated by the flip \[R\ni (x,y)\mapsto (-x,y),\]
or equivalently with the subgroup generated by any other element of $GL(R)$ related to this flip by an $Sp(56,\mb R)$-conjugation. (We will see in a moment how to interpret the flip.) For the symmetric part of $y$, the second term in \eqref{eq:y} is automatically invariant under the entire $GL(R)$, while the first term is invariant w.r.t.\ the scaling, the flip, and the part of $Sp(56,\mb R)$ preserving $q$. Hence
      \[\on{Aut}(y)=(\mb Z_2\ltimes E_{7(7)})\times \mb R^+,\qquad \mf{aut}(y)=\mf{e}_{7(7)}\oplus\mb R.\]

Let us now set some basic nomenclature which we will use in the case of $E_7$-algebroids. First, a coisotropic subspace of $R$ is said to be \emph{of type M} when it has codimension 7. Using the same approach as in \cite{BHVW}, one can show that all subspaces of type M are related by an action of $E_{7(7)}$, i.e.\ they form a single orbit. Explicitly, using the formulas from \cite{CSCW}, we have the following decompositions under the subgroup $GL^+(7,\mb R)\subset E_{7(7)}\times\mb R^+$:
\begin{align}
  \label{eq:repdep}
  R &=T\oplus\w2T^*\oplus\w5T^*\oplus(T^*\otimes\w7T^*) , \\
  \label{eq:deco}
  \mf{e}_{7(7)}\oplus\mb R &=\mb R\oplus (T^*\otimes T)\oplus \w3T^*\oplus\w6T^*\oplus\w3T\oplus\w6T ,
\end{align}
where $T:=\mb R^7$. The subspace $\w2T^*\oplus\w5T^*\oplus(T^*\otimes\w7T^*)\subset R$ is then of type M. Note that of the $GL(7,\mb R)$ action on $T$, only the set of elements with positive determinant is a subgroup of $E_{7(7)}\times \mb R^+$. However, the $\mb Z_2$ flip symmetry $(x,y)\to(-x,y)$ acts as $-\on{id}$ on $T$ and $\w5T^*$ and trivially on the other summands in~\eqref{eq:repdep}, and so it corresponds to the action of $-\on{id}\in GL(7,\mb R)$. Thus $\on{Aut}(y)$ is precisely the extension of $E_{7(7)}\times \mb R^+$ required to include the full $GL(7,\mb R)$ action on $R$. 
        
An $E_7$-algebroid is called \emph{M-exact} if any one of the following equivalent conditions are satisfied:
\begin{itemize}
\item the sequence \eqref{eq:seq} is exact and the base is 7-dimensional
\item $\rho$ is surjective and $\ker\rho$ is of type M everywhere
\item $\rho$ is surjective and the base is 7-dimensional.
\end{itemize}

Finally, note the following important fact, which holds for subspaces $V\subset R$ in the $n=7$ case (and trivially also for $n<7$):
\begin{equation}\label{eq:implies}
  y^{\,\alpha\beta}_{(\gamma\delta)}\xi_\alpha\zeta_\beta=0\quad \forall\xi,\zeta\in V^\circ
           \qquad \implies \qquad y^{\alpha\beta}_{\,\gamma\delta}\xi_\alpha\zeta_\beta=0\quad \forall\xi,\zeta\in V^\circ.
\end{equation}
We will use this property in Subsection~\ref{subsec:cons} when performing the algebraic characterisation of consistent truncations.
      
\subsubsection*{Example: Exceptional tangent bundle}

Motivated by \eqref{eq:repdep}, for any 7-dimensional manifold $M$ we can define the \emph{exceptional tangent bundle}
\[\mb TM:=TM\oplus\w2T^*M\oplus\w5T^*M\oplus(T^*M\otimes\w7T^*M),\]
with $\rho$ being the projection onto $TM$, the bracket
\begin{align*}
  [X+\sigma_2+\sigma_5+\tau, X' +\sigma_2'&+\sigma_5'+\tau']_0 \\
  &= \mc L_X X'
  + (\mc L_X \sigma_2'-i_{X'}d\sigma_2) 
  + (\mc L_X \sigma_5'-i_{X'}d\sigma_5-\sigma_2'\wedge d\sigma_2) \\
  & \qquad + (\mc L_X \tau'-j\sigma_5'\wedge d\sigma_2-j\sigma_2'\wedge d\sigma_5),
\end{align*}
and the $Y$-tensor given by \eqref{eq:pi} (or explicitly by \eqref{eq:y}). Here $j$ is the map \[\w pT^*\otimes \w{8-p}T^*\to T^*\otimes\w7T^*,\] defined by
      \[(j\alpha\wedge\beta)(X)=(i_X\alpha)\wedge \beta\in\w7T^*,\qquad X\in T.\]
      
In order to see that this defines a Y-algebroid, we note that the above bracket can be written more concisely as follows. Let $D$ be any torsion-free conventional connection on $TM$. This naturally extends to a conventional connection on $\mb TM$ and hence defines an associated algebroid connection $\nabla^{(D)}$. By definition $DY=0$ and hence $\nabla^{(D)}Y=0$. Using the fact that, since $D$ is torsion-free, $\mathcal{L}_XX'=D_XX'-D_{X'}X$,  $d\sigma_2=d_D\sigma_2$, etc., one can rewrite \cite{CSCW}
       \begin{equation}\label{eq:general}
        [u,v]_0=\nabla^{(D)}_uv-(1-Y)(\nabla^{(D)}u)v
        =\nabla^{(D)}_uv-\nabla^{(D)}_vu+Y(\nabla^{(D)}u)v, 
      \end{equation}
independent of the choice of $D$. We then immediately see that the last axiom for a Y-algebroid is satisfied for the connection $\nabla^{(D)}$.           
      
\subsubsection*{Classification}\label{subsec:class}
Two M-exact $E_7$-algebroids $E$ and $E'$ over $M$ are equivalent if there exists a vector bundle isomorphism $\phi\colon E\to E'$ that is compatible with the brackets, has $\phi_*(Y)=Y'$ and commutes with the anchor maps. We can then consider classifying general M-exact $E_7$-algebroids up to equivalence, both locally and globally.   

Starting with the local problem, suppose that $E\to M$ is a general M-exact $E_7$-algebroid. One can check --- again by repeating the calculation from \cite{BHVW} --- that locally (over a contractible open set $\mc U$) there exists a vector bundle isomorphism $E\to \mb T\,\mc U$ which maps the $Y$-tensor on $E$ to that on $\mb T\,\mc U$ and commutes with the anchors, i.e.\ it makes the following diagram commute
      \[\begin{tikzcd}[column sep=small] E \arrow[rr] \arrow[dr, swap, "\rho"] & & \mb T\,\mc U \arrow[dl,"\rho"]\\ & T\mc U &\end{tikzcd}\]
      However, such a map is not unique, since we can compose it with any vector bundle automorphism of $\mb T\,\mc U$ which preserves $Y$ and $\rho$. Such automorphisms are parametrised by maps $\mc U\to S$, where $S$ is the Lie group with Lie algebra \[\mf s=\mb R'\oplus \w3T^*\oplus \w6T^*\subset \mf e_{7(7)}\oplus\mb R,\]
      with $\mb R'$ being the particular combination of the two $\mb R$'s in \eqref{eq:deco},
      \[\mb R'\subset\mb R\oplus\mb R\subset \mb R\oplus (T^*\otimes T)\subset \mf e_{7(7)}\oplus\mb R,\]
      that fixes elements of $T\subset R$. In other words, infinitesimal automorphisms are given by an arbitrary function, 3-form, and a 6-form.
      
      Using one such local identification $E\cong \mb T\,\mc U$, we see that any M-exact $E_7$-algebroid can be locally written as
      \[T\,\mc U\oplus\w2T^*\mc U\oplus\w5T^*\mc U\oplus(T^*\mc U\otimes\w7T^*\mc U),\]
      with $Y$ and $\rho$ having the same form as on $\mb T\,\mc U$. What remains is to determine the possible form of the bracket on $\mb T\,\mc U$. The definition of a Y-algebroid requires the following behaviour of the bracket $[\slot,\slot]$ under the multiplication of the arguments by a function:
      \[[u,fv]-f[u,v]=(\rho(u)f)v,\qquad [fu,v]-f[u,v]=(Y-1)(\rho^*df\otimes u)v.\]
      We know that $\mb T\,\mc U$ can be endowed with the bracket $[\slot,\slot]_0$ and connection $\nabla^{(D)}$, associated to a torsion-free conventional connection $D$ on $\mc U$, as discussed in the preceding example. Since the bracket $[\slot,\slot]_0$ on $\mb T\,\mc U$ does satisfy these properties, the most general bracket on $\mb T \,\mc U$ has to differ from it by a purely tensorial quantity, i.e.
      \[[u,v]=[u,v]_0+A(u)v,\]
      for some vector bundle map $A\colon \mb T\,\mc U\to (\mf e_{7(7)}\oplus\mb R)\times \mc U$.
      In addition, since the same anchor on $\mb T\,\mc U$ intertwines both $[\slot,\slot]$ and $[\slot,\slot]_0$ with vector fields on $\,\mc U$, we have that $\rho(A(u)v)=0$, i.e.\
      \[A\colon \mb T\,\mc U\to \mf s\times \mc U.\]
      Similarly, any connection can be written as $\nabla_u=\nabla^{(D)}_u+B(u)$, for some
      \[B\colon \mb T\,\mc U\to (\mf e_{7(7)}\oplus\mb R)\times \mc U.\]
      The last axiom in the definition of a Y-algebroid now imposes the condition that there exists such a $B$ solving the tensor equation
      \[A(u)u=Y(B(\slot)u)u,\qquad \forall u\in\mb T\,\mc U.\]

      An explicit calculation reveals that the only $A$ admitting a solution have the form
      \begin{equation*}
        A(X+\sigma_2+\sigma_5+\tau)=i_XF_1+(i_XF_4-F_1\wedge\sigma_2)+(i_XF_7-F_4\wedge\sigma_2-2F_1\wedge\sigma_5),
      \end{equation*}
      for some $F_1\in\Omega^1(\mc U)$, $F_4\in\Omega^4(\mc U)$, and $F_7\in\Omega^7(\mc U)$.
      Finally, imposing the Leibniz identity for the bracket produces the Bianchi identities
      \begin{equation}\label{eq:bianchi}
        dF_1=0,\qquad dF_4+F_1\wedge F_4=0.
      \end{equation}

      Note that there is no Bianchi for $F_7$ as we are on a 7-dimensional manifold. Physically, the forms $F_1$, $F_4$, and $F_7$ can be associated with the field strengths of the fields present in the reduction of M-theory, namely the scalar warp factor, the 3-form potential, and the dual 6-form potential, respectively.
      
      We have now determined that the most general bracket has (locally) a very specific form, parametrised only by a set of fluxes $F_1$, $F_4$, $F_7$.
      Changing the identification $E\cong \mb T\,\mc U$ by a map $\mc U\to S$ preserves this form, up to a change in the fluxes. At the infinitesimal level, acting with $c_0\in\Omega^0(\mc U)$, $c_3\in\Omega^3(\mc U)$, and $c_6\in\Omega^6(\mc U)$ results in
\begin{align*}
  c_0: && \delta F_1 &=-dc_0, & \delta F_4&=c_0F_4, & \delta F_7&=2c_0F_7, \\
  c_3: && \delta F_1 &=0, & \delta F_4&=-dc_3-F_1\wedge c_3, &
                 \delta F_7&=F_4\wedge c_3, \\
  c_6: && \delta F_1 &=0, & \delta F_4&=0, & \delta F_7&=-dc_6-2F_1\wedge c_6,
\end{align*}
      respectively. In particular, any set of fluxes satisfying \eqref{eq:bianchi} can be locally removed by these transformations. This establishes the following result.
      \begin{thm}\label{thm:class}
        Every M-exact $E_7$-algebroid locally has the form of the untwisted exceptional tangent bundle.
      \end{thm}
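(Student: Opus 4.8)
The plan is to leverage the local normal form already established in the discussion preceding the theorem and then to eliminate the remaining flux data by a cohomological argument. Concretely, over a contractible open set $\mc U$ one fixes an identification $E\cong\mb T\,\mc U$ compatible with $Y$ and $\rho$; as shown above, the bracket is then forced into the form $[\slot,\slot]_0+A(u)v$ with $A$ built from fluxes $F_1\in\Omega^1(\mc U)$, $F_4\in\Omega^4(\mc U)$, $F_7\in\Omega^7(\mc U)$ subject to the Bianchi identities $dF_1=0$ and $dF_4+F_1\wedge F_4=0$. It then remains to show that, after a further change of the identification by a suitable map $\mc U\to S$, all three fluxes can be taken to vanish, so that the bracket becomes exactly $[\slot,\slot]_0$, i.e.\ $E$ over $\mc U$ is the untwisted exceptional tangent bundle.

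The elimination proceeds in three steps, exploiting the triangular way in which the gauge parameters $c_0,c_3,c_6$ act on $(F_1,F_4,F_7)$. First, since $F_1$ is closed and $\mc U$ is contractible, the Poincaré lemma gives $F_1=dc_0$ for some $c_0\in\Omega^0(\mc U)$; integrating the one-parameter group generated by $c_0$ sends $F_1\mapsto 0$ while merely rescaling $F_4,F_7$ by the nowhere-vanishing functions $e^{c_0},e^{2c_0}$. With $F_1=0$ the second Bianchi identity reduces to $dF_4=0$, so $F_4=dc_3$ for some $c_3\in\Omega^3(\mc U)$, and the flow generated by $c_3$ removes $F_4$ (shifting $F_7$ only by an exact term and leaving $F_1=0$). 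Finally, with $F_1=F_4=0$, the remaining $F_7$ is a top-degree form on a $7$-manifold, hence automatically closed and therefore exact on the contractible $\mc U$: $F_7=dc_6$, and the flow generated by $c_6$ removes it.

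Assembling these three gauge transformations into a single map $\mc U\to S$, the bracket on $\mb T\,\mc U$ is brought to $[\slot,\slot]_0$ with $Y$ and $\rho$ unchanged, which is precisely the statement that $E|_{\mc U}$ is equivalent to the untwisted exceptional tangent bundle.

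I expect the only genuine subtlety to be bookkeeping: one must integrate the infinitesimal transformations listed above to finite ones and check that the chosen order of elimination ($F_1$, then $F_4$, then $F_7$) is consistent, i.e.\ that removing an earlier flux does not reintroduce a later obstruction. The triangular structure --- only $c_0$ affects $F_1$; only $c_0$ and $c_3$ affect $F_4$; and $F_7$ is acted on by all three but the cross-terms enter as exact shifts --- makes this automatic, so that no real obstacle arises once the explicit form of $A$ and the Bianchi identities are in hand; and those were exactly the content of the explicit computation cited just before the theorem statement.
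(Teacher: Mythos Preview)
Your proposal is correct and follows essentially the same approach as the paper: reduce to the twisted exceptional tangent bundle with fluxes $F_1,F_4,F_7$ satisfying the Bianchi identities, then use the $S$-action (parametrised by $c_0,c_3,c_6$) to gauge them away over a contractible set. The paper states the last step in one sentence (``any set of fluxes satisfying \eqref{eq:bianchi} can be locally removed by these transformations''), whereas you spell out the triangular elimination via the Poincar\'e lemma; both are the same argument.
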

      
We can extend this to also characterise the global classification of M-exact $E_7$-algebroids. From the above transformations it follows that the automorphisms (i.e.\ bundle isomorphisms preserving the brackets, anchors, and $Y$ tensors) of the untwisted exceptional tangent bundle $\mb T\,\mc U$ form a group $\mc G(\mc U)$ whose Lie algebra is given by elements of the form 
      \[\{c_0+c_3+c_6\in\Omega^0(\mc U)\oplus\Omega^3(\mc U)\oplus\Omega^6(\mc U)\mid dc_0=dc_3=dc_6=0\},\]
with the bracket  \[[c_0+c_3+c_6,c_0'+c_3'+c_6']=(c_0c_3'-c_0'c_3)+(2c_0c_6'-2c_0'c_6-c_3\wedge c_3').\]
  Note that this holds for any open set $\mc U\subset M$, not just for contractible ones. We thus obtain a sheaf $\mc G$ of non-abelian groups, with $\mc G(\mc U)$ being the group of automorphisms of $\mb T\,\mc U$.
  
Theorem~\ref{thm:class} implies that for any M-exact $E_7$-algebroid $E\to M$ we can find an open cover $\{\mc U_\alpha\}$ of $M$ such that on each $\mc U_\alpha$ have an isomorphism $\varphi_\alpha$ between the untwisted exceptional tangent bundle $\mb T\,\mc U_\alpha$ and $E$. On overlaps $\mc U_\alpha\cap \,\mc U_\beta$ we thus get automorphisms
      \[g_{\alpha\beta}:=\varphi_\beta^{-1}\circ \varphi_\alpha\]
      of the untwisted exceptional tangent bundle $\mb T(\mc U_\alpha\cap \,\mc U_\beta)$, i.e.\ $g_{\alpha\beta}\in\mc G(\mc U_\alpha\cap\,\mc U_\beta)$. On triple overlaps $\mc U_\alpha\cap \,\mc U_\beta\cap \,\mc U_\gamma$ this satisfies
      \[g_{\alpha\gamma}=g_{\beta\gamma}g_{\alpha\beta},\]
      and so we obtain a \v Cech 1-cocycle with values in the sheaf $\mc G$. Changing the identifications on $\mc U_\alpha$ from $\varphi_\alpha$ to $\varphi_\alpha h_\alpha$, the cocycle $\{g_{\alpha\beta}\}$ changes to $\{h_\beta^{-1}\!g_{\alpha\beta}h_\alpha\}$. The M-exact $E_7$-algebroids over $M$ are thus classified by the \v Cech cohomology\footnote{Note that when constructing an algebroid by gluing together smaller pieces, we can use a partition of unity to get a suitable globally defined connection.} \[\check H^1(M,\mc G).\]
      We leave the question of rewriting this cohomology in terms of more standard geometric structures to a future work.
      
    \subsection{Consistent truncations and Poissson--Lie U-duality}\label{subsec:cons}
      It has been shown in \cite{LSCW,HS} that consistent truncations to maximally supersymmetric $(11-n)$-dimensio\-nal theories correspond to \emph{Leibniz parallelisations}, i.e.\ global frames $e_\alpha$ of the (possibly twisted) exceptional tangent bundle with $Y^{\alpha\beta}_{\;\gamma\delta}$ and $c^\gamma_{\alpha\beta}$ constant, where the latter are the structure coefficients defined by $[e_\alpha,e_\beta]=c^\gamma_{\alpha\beta}e_\gamma$ and physically correspond to the embedding tensor of the lower-dimensional theory. Let us reformulate this in the present language, for the case of an $E_7$-algebroid.

\subsubsection*{Exceptional Manin pairs}
      
      Suppose we start with a Leibniz parallelisation of an M-exact $E_7$-algebroid $E$ over a compact connected base $M$, with $\nabla$ satisfying \eqref{eq:bkt}. This induces a trivialisation of the bundle, \[E\cong M\times\mf a,\] with the vector space $\mf a$ inheriting a well-defined Leibniz bracket and an invariant tensor $Y$; the basis of $\mf a$ gives the global frame $e_\alpha$ of $E$. Although $\nabla$ does not necessarily descend to $\mf a$, we can always replace it by one which does, due to the following argument.
       First, note that the trivialisation gives $\Gamma(E)\cong C^\infty(M)\otimes\mf a$, and consequently induces a second connection $\hat \nabla_uv:=\rho(u)v$ satisfying $\hat \nabla Y=0$. We set $B:=\nabla-\hat\nabla$. Seeing $x\in\mf a$ as a constant section of $E$ (i.e.\ $\hat\nabla x=0$), we then have
      \[[x,x]=Y(B(\slot)x)x.\]
      Note that the LHS is constant by assumption. Hence, for any two points $m,m'\in M$ we have \[Y((B(\slot)_m-B(\slot)_{m'})x)x=0\] for all $x\in\mf a$. Consequently, we can fix some $m\in M$ and replace the connection $\nabla=\hat\nabla+B(\slot)$ by $\hat\nabla+B(\slot)_m$, which still satisfies \eqref{eq:bkt}, but also maps constant sections to constant sections.
      
      The Leibniz algebra $\mf a$ thus inherits a connection (given simply by $B_m$) and becomes an $E_7$-algebroid over a point base (with $\rho=0$),
      \[\mf a\to \{*\}.\]
      Although $\mf a$ is not a Lie algebra in general, we can form a Lie algebra $\mf a_{\text{Lie}}$ out of it by modding out by the ideal $\mf i$ spanned by elements of the form $[u,u]$ for $u\in \mf a$,
      \[\mf a_{\text{Lie}}:=\mf a/\mf i.\]
            
      The anchor map of $E$ translates to an action of $\mf a$ on $M$, i.e.\ to a homomorphism $\mf a\to\Gamma(TM)$. Since the anchor is surjective (because the algebroid is M-exact) and $\mf i$ is in its kernel, the homomorphism descends to a transitive action of $\mf a_{\text{Lie}}$ on $M$. Using the compactness of $M$ we then have that $M\cong A_{\text{Lie}}/B_{\text{Lie}}$, where $A_{\text{Lie}}$ is the unique connected and simply connected Lie group integrating $\mf a_{\text{Lie}}$, and $B_{\text{Lie}}\subset A_{\text{Lie}}$ is a subgroup corresponding to a Lie subalgebra $\mf b_{\text{Lie}}\subset \mf a_{\text{Lie}}$. Finally, we take $\mf b\subset \mf a$ to be the preimage of $\mf b_{\text{Lie}}$ under $\mf a\to \mf a/\mf i$. Note that $\mf b$ is a subalgebra containing $\mf i$. Looking at the coset of the identity, $[1]\in A_{\text{Lie}}/B_{\text{Lie}}\cong M$, we see that the kernel of the anchor coincides precisely with $\mf b$, and so $\mf b$ has to be a type M subspace.

Summarising, any M-exact Leibniz parallelisable $E_7$-algebroid (over a compact connected base) produces a pair of \begin{itemize}
        \item an $E_7$-algebroid $\mf a\to \{*\}$
        \item a type M subalgebra $\mf b\subset \mf a$ containing the ideal $\mf i$.
      \end{itemize}
      We call such data an \emph{exceptional Manin pair}. Furthermore, it follows from the axioms of a Y-algebroid that the original M-exact $E_7$-algebroid is uniquely determined in terms of this data.
      
      Let us now show the converse, namely that any exceptional Manin pair comes from an M-exact $E_7$-algebroid. Starting from $\mf b\subset\mf a$, we take a corresponding pair of groups $B_{\text{Lie}}\subset A_{\text{Lie}}$ and construct the trivial vector bundle $\mf a\times A_{\text{Lie}}/B_{\text{Lie}}\to A_{\text{Lie}}/B_{\text{Lie}}$.\footnote{There is a small subtlety here in that $B_{\text{Lie}}$ needs to be a closed subgroup, in order to get a well-behaved quotient. Also, for a given pair of $\mf b\subset \mf a$ there exist in general multiple possible subgroups $B_{\text{Lie}}\subset A_{\text{Lie}}$, differing by their number of connected components. We ignore these technicalities.} We equip this bundle with a $Y$-tensor coming from $\mf a$, and with an anchor map coming from the action of $\mf a$ on $A_{\text{Lie}}/B_{\text{Lie}}$. Finally, understanding the sections of the vector bundle as functions on $M$ valued in $\mf a$, we define the bracket by
      \[[u,v]=\rho(u)v-\rho(v)u+Y(\rho^*du)v+[u,v]_\mf a,\]
      where the last term is the bracket on $\mf a$.
      One easily sees that this structure in general satisfies all the axioms of an M-exact $E_7$-algebroid apart from the Leibniz identity. Furthermore, one easily checks that for arbitrary (not necessarily constant) sections $u,v$ we have
      \[\rho([u,v])-[\rho(u),\rho(v)]=\rho(Y(\rho^*du)v),\]
      which vanishes due to the coisotropy of $\mf b$ and the property \eqref{eq:implies} of $Y$. We now use a trick, employed in \cite{BHVW}.
      
      We start by noting that in Subsection \ref{subsec:class}, when performing the classification of M-exact $E_7$-algebroids, we have only used the Leibniz identity at the very end, when deriving the Bianchi identities. Up to that point, we only used the other Y-algebroid axioms, together with the fact that $\rho([u,v])=[\rho(u),\rho(v)]$. Since the structure that we just constructed over $A_{\text{Lie}}/B_{\text{Lie}}$ does satisfy these properties, we know that the bracket locally needs to have the form of the exceptional tangent bundle twisted by some $F_1, F_4, F_7$ --- it is however not clear in general whether these fluxes satisfy the relevant Bianchi identities. Still, it is straightforward to check that even when the fluxes do not satisfy the Bianchi identities, the corresponding \emph{Jacobiator}
      \[J(u,v,w):=[u,[v,w]]-[[u,v],w]-[v,[u,w]]\]
      is always a tensor. Now since $\mf a$ is a Y-algebroid, its corresponding Jacobiator vanishes. This implies that the Jacobiator necessarily vanishes on the constant sections of $\mf a\times A_{\text{Lie}}/B_{\text{Lie}}$, and hence --- due to tensoriality --- has to be identically zero, implying that we indeed obtain a Y-algebroid. We have thus established both directions of:
      \begin{thm}[Classification of maximally supersymmetric consistent truncations]
        Leibniz parallelisations of M-exact $E_7$-algebroids correspond to exceptional Manin pairs.\footnote{The correspondence is one-to-one, up to the technicalities mentioned in the previous footnote.}
      \end{thm}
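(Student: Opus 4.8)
The plan is to establish a bijection between (equivalence classes of) Leibniz parallelisations of M-exact $E_7$-algebroids over a compact connected base and (isomorphism classes of) exceptional Manin pairs $\mf b\subset\mf a$, by showing that the two constructions spelled out in the text are mutually inverse. The forward construction was already carried out above: given a Leibniz parallelisation, one replaces $\nabla$ by $\hat\nabla+B(\slot)_m$ so that constant sections form a subalgebra, extracts the Leibniz algebra $\mf a$ with its invariant tensor $Y$, forms $\mf a_{\mathrm{Lie}}=\mf a/\mf i$, uses transitivity and compactness to write $M\cong A_{\mathrm{Lie}}/B_{\mathrm{Lie}}$, and lets $\mf b$ be the preimage of $\mf b_{\mathrm{Lie}}$; the fact that $\ker\rho$ is of type M everywhere (Proposition on coisotropy plus M-exactness) shows $\mf b$ is a type M subspace, and it contains $\mf i$ by construction. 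So the first block of the proof is just to assemble these observations into the statement ``a Leibniz parallelisation yields an exceptional Manin pair, and the parallelisation is recovered from $(\mf a,\mf b)$'', the last clause being the uniqueness remark: the bracket on $E\cong M\times\mf a$ is forced by the Y-algebroid axioms once $\rho$, $Y$ and the bracket on constant sections are fixed, since for general sections $u=f^\alpha e_\alpha$, $v=g^\beta e_\beta$ one expands $[u,v]$ using $[u,fv]-f[u,v]=(\rho(u)f)v$ and $[fu,v]-f[u,v]=(Y-1)(\rho^*df\otimes u)v$ and is left only with the constant-section bracket $[e_\alpha,e_\beta]=c^\gamma_{\alpha\beta}e_\gamma$.

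Next I would treat the converse. Starting from $\mf b\subset\mf a$, choose $B_{\mathrm{Lie}}\subset A_{\mathrm{Lie}}$ (closed, modulo the technicalities flagged in the footnote), put $M:=A_{\mathrm{Lie}}/B_{\mathrm{Lie}}$, and on the trivial bundle $\mf a\times M$ define $\rho$ from the $\mf a$-action on $M$, carry over $Y$ from $\mf a$, and set
\[[u,v]=\rho(u)v-\rho(v)u+Y(\rho^*du)v+[u,v]_{\mf a}.\]
The routine checks are: $\rho$-linearity and the two Leibniz-type twisting rules for $[\slot,\slot]$, and invariance/$Y$-compatibility of a suitable connection — these follow because $[\slot,\slot]$ differs from the exceptional-tangent-bundle bracket only by the tensorial $\mf a$-term and a connection term, exactly as in Subsection~\ref{subsec:class}. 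One also needs $\rho([u,v])=[\rho(u),\rho(v)]$ for \emph{all} sections; computing the difference gives $\rho(Y(\rho^*du)v)$, which vanishes because $\mf b=\ker\rho$ is coisotropic and, by \eqref{eq:implies}, one in fact has $Y(\rho^*du\otimes v)\in\ker\rho$, so $\rho$ kills it. The only axiom still in question is the Leibniz identity.

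For that I would use exactly the trick indicated in the text, which I expect to be the main (though not hard, once set up) point of the argument. Because the classification in Subsection~\ref{subsec:class} used the Leibniz identity only at the final step (to derive the Bianchi identities \eqref{eq:bianchi}), the structure just built — which satisfies all other Y-algebroid axioms and has $\rho$ intertwining brackets — is locally of the exceptional-tangent-bundle form twisted by some $F_1,F_4,F_7$, not necessarily closed. A direct check shows the Jacobiator $J(u,v,w)=[u,[v,w]]-[[u,v],w]-[v,[u,w]]$ is $C^\infty(M)$-linear in each argument (the would-be non-tensorial terms cancel precisely because $\rho$ still intertwines the brackets), so $J$ is a tensor. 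Evaluating $J$ on constant sections of $\mf a\times M$ reduces it to the Jacobiator of the Leibniz algebra $\mf a$ itself, which vanishes since $\mf a$ is (the underlying algebroid of) a Y-algebroid over a point; by tensoriality $J\equiv0$, so the Leibniz identity holds and $\mf a\times M\to M$ is an M-exact $E_7$-algebroid, manifestly Leibniz parallelisable with frame the basis of $\mf a$. Finally I would note that the two constructions are inverse: applying the forward construction to $\mf a\times M$ returns $(\mf a,\mf b)$ on the nose (the constant sections are the chosen frame, $\mf a_{\mathrm{Lie}}$ acts transitively with stabiliser $B_{\mathrm{Lie}}$, and the preimage of $\mf b_{\mathrm{Lie}}$ is $\mf b$), and applying the converse to a parallelisation $(E,\{e_\alpha\})$ reproduces $E$ by the uniqueness clause above — all ``up to the technicalities mentioned in the previous footnote'' concerning the choice of $B_{\mathrm{Lie}}$ and closedness. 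The one place requiring genuine care, rather than bookkeeping, is ensuring $M\cong A_{\mathrm{Lie}}/B_{\mathrm{Lie}}$ is well-defined and that the $\mf a$-action is by complete vector fields so the quotient description is valid; compactness of $M$ is what makes this work.
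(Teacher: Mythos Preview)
Your proposal is correct and follows essentially the same approach as the paper: the forward construction (replacing $\nabla$ by $\hat\nabla+B(\slot)_m$, extracting $\mf a$, passing to $\mf a_{\mathrm{Lie}}$, and using compactness to realise $M$ as a coset) and the converse construction (defining the bracket on $\mf a\times A_{\mathrm{Lie}}/B_{\mathrm{Lie}}$ by the displayed formula, checking $\rho$ intertwines brackets via coisotropy and \eqref{eq:implies}, then invoking the classification-without-Leibniz trick to conclude the Jacobiator is a tensor vanishing on constants) match the paper exactly. Your added remarks on explicit uniqueness via the function-twisting rules and on mutual inversity of the two constructions are natural elaborations the paper leaves implicit.
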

      \noindent 
      As for the classification theorem~\cite{BHVW,BHVW2,HV} for $n\leq6$, this is consistent with an earlier result of Inverso~\cite{I} derived using different techniques. We also note that the explicit structure of the corresponding Leibniz algebras and global frames, including the $n=7$ case, was recently expanded upon in~\cite{HasSak}. 

\subsubsection*{Poisson--Lie U-duality}
      Sometimes it is possible to find distinct exceptional Manin pairs which share the same $\mf a$. The corresponding Leibniz parallelisable spaces are then \emph{Poisson--Lie U-dual} \cite{Sakatani,MT,MST}, for the following reason.
      
      Recall that a \emph{generalised metric} is a reduction of the structure group from $E_{7(7)}\times\mb R^+$ to its maximally compact subgroup $SU(8)/\mb Z_2$. On an M-exact $E_7$-algebroid, such a reduction parametrises the bosonic field content of the 11-dimensional supergravity restricted to a 7-dimensional warped compactification manifold.      
      It is then possible to generalise the standard construction of Levi-Civita connections and define a generalisation of the Ricci tensor, whose vanishing corresponds to the equations of motion of the supergravity~\cite{CSCW}.
      
      Suppose now that we have two Manin pairs $(\mf a,\mf b)$, $(\mf a,\mf b')$, leading to Leibniz parallelisations of M-exact $E_7$-algebroids over spaces $A_{\text{Lie}}/B_{\text{Lie}}$ and $A_{\text{Lie}}/B'_{\text{Lie}}$, respectively. Equipping $\mf a$ with a generalised metric, we obtain induced generalised metrics on this pair of algebroids. Because of the relation of the brackets on these algebroids with the one on $\mf a$, it is easy to see (c.f.\ \cite{BHVW}) that the equations of motion hold on $A_{\text{Lie}}/B_{\text{Lie}}$ if and only if they hold on $A_{\text{Lie}}/B'_{\text{Lie}}$.
    \subsection{Examples}
      We now briefly discuss some standard examples of Leibniz parallelisations and Poisson--Lie U-duality in this language. 
      \subsubsection*{(a) Tori and U-duality}
        The simplest case comes from taking $\mf a=\mb R^{56}$ to be abelian and $\mf b$ an arbitrary type M subspace. This gives rise to a Leibniz parallelisation of the torus $T^7$. Note that different type M subspaces are obtained via the action of $E_{7(7)}\times\mb R^+$; Poisson--Lie U-duality in this case reduces to ordinary U-duality.
      \subsubsection*{(b) Groups and generalised Yang--Baxter deformations}
        Let $K$ be a 7-dimensional Lie group. Then the exceptional tangent bundle $\mb TK$ has a natural Leibniz parallelisation, induced by left invariant tensor fields. The corresponding exceptional Manin pair is given by
        \[\mf a=\mf k\oplus\w2\mf k^*\oplus\w5 \mf k^*\oplus (\mf k^*\otimes\w7\mf k^*),\qquad \mf b=\w2\mf k^*\oplus\w5 \mf k^*\oplus (\mf k^*\otimes\w7\mf k^*),\]
        with the bracket obtained from $[\slot,\slot]_0$ by making the replacement
        \[\text{Lie derivative} \to \text{(co)adjoint action},\;\; \text{de Rham differential}\to\text{Chevalley--Eilenberg differential}.\]
        This bracket can be twisted by any set of invariant forms $F_1$, $F_4$, $F_7$ satisfying the Bianchi identities.
        One can also try to study Poisson--Lie U-dual setups. In particular, requiring that the new $\mf b'$ be transverse to $\mf k\subset \mf a$ corresponds to generalised Yang--Baxter deformations, c.f.\ \cite{BHVW2}.
        \subsubsection*{(c)
          7-sphere}
        Take $\mf a:=\mf{so}(8)\oplus\mathbf{28}$ and $\mf b:=\mf{so}(7)\oplus\mathbf{28}$. The bracket is defined by
        \[[(x,a),(y,b)]=([x,y]_{\mf{so}(8)},x\cdot b),\]
        where $\cdot$ is the action of $\mf{so}(8)$ on $\mathbf{28}$. The $Y$-tensor comes from the action of $E_{7(7)}\times\mb R^+$ on $\mf a$, which corresponds to the decomposition $R=\w2\mb R^8\oplus\w2(\mb R^8)^*$ from subsection \ref{subsec:M-exact}. This results in the Leibniz parallelisation of the exceptional tangent bundle over $S^7$, twisted by $F_7=6\on{vol}(S^7)$, corresponding to the consistent truncation found in \cite{DN}. The individual summands $\mf{so}(8)$ and $\mathbf{28}$ produce trivialisations of $TS^7\oplus \w5T^*S^7$ and $\w2T^*S^7\oplus(T^*S^7\otimes\w7T^*S^7)$, respectively.
        
        \section{Alternative approaches, \texorpdfstring{$E_8$}{E8} and exceptional complex geometry}

Let use briefly discuss some possible extensions of and concepts related to the Y-algebroid formulation presented here. 
        
      There exists an alternative (and in general inequivalent) approach to Y-algebroids, which makes the connection to the formulas \eqref{eq:general} more immediate, while moves away slightly from Courant and $G$-algebroids. In this approach, we can define the \emph{Y-algebroids$'$} (prime is used to distinguish them from Y-algebroids) as the tuple consisting of
      \begin{itemize}
        \item a vector bundle $E\to M$
        \item an anchor map $\rho\colon E\to TM$
        \item a tensor $Y\colon E^*\otimes E\to E^*\otimes E$
        \item a map (connection) $\nabla\colon \Gamma(E)\times\Gamma(E)\to\Gamma(E)$ satisfying \eqref{eq:connection} and preserving $Y$,
      \end{itemize}
      such that the bracket \emph{defined} by
      \begin{equation}\label{eq:other}
        [u,v]:=\nabla_u v-\nabla_v u+Y(\nabla u)v
      \end{equation}
      satisfies the Leibniz identity and $L_uY=0$ for all $u\in \Gamma(E)$.
      
      One can drop some of the conditions from this definition, provided suitable further constraints on the form of $Y$ are assumed (c.f.\ the constraints in \cite{BCKT}).
      It is possible to work with this definition and derive most of the previous results of this paper in this context, including the local classification and the algebraic characterisation of consistent truncations.
      
      However, in the present paper we have chosen to focus on Y-algebroids (instead of Y-algebroids$'$)
      since we believe that these provide a more conceptual explanation of the origin of the well-known formulas \eqref{eq:general} or \eqref{eq:other}. Y-algebroids also connect more directly to the previous definitions of Courant algebroids and $G$-algebroids.
      
      It is very natural to ask how the present results might be extended to the case of $E_{8(8)}$, where the relevant representation is $R=\r{248}$. The conceptual issue here is that the ``untwisted bracket'', given by the formula \eqref{eq:general}, ceases to satisfy the Leibniz identity, i.e.\ it no longer defines a Leibniz algebroid. There are known approaches to bypass this problem \cite{HS-E8,CR-E8}, but they seem to inevitably lead beyond the realm of Leibniz algebroids. This breakdown is not totally unexpected, since in the case of $E_8$ a proper treatment of the dual graviton becomes unavoidable, which in turn suggests that the required constructions might be radically different from the standard ones.

      Finally we note how the $E_7$ exceptional complex geometry \cite{ASCTW} fits in the present framework. First, decompose $\hat Y\colon E\otimes E\to E\otimes E$ into a symmetric and antisymmetric part, $\hat Y=\hat S+\hat A$, and recall that $\on{rank}(\on{im}\hat A)=1$.
      
      An \emph{exceptional Dirac structure} on an $E_7$-algebroid is a subbundle $F\subset E$ which is
      \begin{itemize}
        \item maximally isotropic, i.e.\ $\hat S|_F=0$ and $\on{rank}F=7$
        \item involutive, i.e.\ $[\Gamma(F),\Gamma(F)]\subset\Gamma(F)$.
      \end{itemize}
      A \emph{complex exceptional Dirac structure} is a subbundle $F\subset E\otimes \mb C$ satisfying the same conditions (with comple\-xified bracket and $Y$).
      An \emph{exceptional complex structure} is a complex exceptional Dirac structure, which in addition satisfies $F\cap \bar F=0$ and $h(u,v):=i\hat A(u,\bar v)$ descends to a definite hermitian inner product on $F$ (valued in the line bundle $\on{im}(\hat A)\otimes\mb C$).
      
      Introduced and analysed in \cite{ASCTW,SW}, these structures provide a convenient framework for investigating various features of general Minkowski $N=1$ $D=4$ flux compactifications in string theory.

    \section{Conclusions and outlook}
    In this paper, we introduced the framework of Y-algebroids, a particular subclass of Leibniz algebroids, that capture the symmetry structures of warped compactifications of M-theory and type II string theory down to $D \geq 4$ dimensions, as manifested in the $O(d,d)$- and $E_{n(n)} \times \mb R^+$-generalised geometries for $n \leq 7$. Moreover, Y-algebroids lead to a compact description of consistent truncations to maximal gauged supergravities in $D \geq 4$ dimensions, of Poisson--Lie U-duality \cite{Sakatani,MT,MST}, and of exceptional complex structures \cite{ASCTW}.
    
    Our concept of Y-algebroids is inspired by the previous notion of $G$-algebroids \cite{BHVW}, which was constructed around a specific group $G$ and included the Leibniz algebroids arising in $E_{n(n)} \times \mb R^+$-generalised geometries for $n \leq 6$, relevant to warped compactifications to $D \geq 5$ dimensions. $G$-algebroids themselves arose as a generalisation of Courant algebroids, replacing the Courant algebroid inner product by a symmetric bilinear map valued in (a vector bundle associated to) a representation of the group $G$. This close resemblance to the Courant case enabled a relatively straightforward analysis and simple proofs of various facts related to dualities and consistent truncations. However, the structure proved to not be general enough to capture the $E_{7(7)} \times \mb R^+$ exceptional geometry.
    
    In contrast, Y-algebroids directly use, as one of their defining features, the tensor \[Y\colon E^* \otimes E \rightarrow E^* \otimes E,\] employed in the description of symmetries of exceptional geometry \cite{BCKT}. Although it makes more immediate the connection to the known explicit formulas for the bracket on $\mb TM$, it leads to a rather radical change of perspective from the viewpoint of Courant or $G$-algebroid geometry and, in particular, increases the computational difficulty of the subsequent analysis. Still, as we demonstrate in the present work, it can be conveniently used to describe the $E_{7(7)} \times \mb R^+$ geometry and to generalise the structural results known from the $n<7$ setups. This is made possible by the fact that no symmetry of $Y$ is assumed, thus allowing it to include the antisymmetric component relevant to the $E_{7(7)}$ group, as well as the setups \eqref{eq:non_split}.
    In addition, Y-algebroids recover naturally the cases $n<7$ as well as various other instances of $G$-algebroids, with $G$ corresponding to the automorphism group of the tensor $Y$.
    
    We have also shown that M-exact $E_{7}$-algebroids can be classified using a set of fluxes $F_1$, $F_4$, $F_7$ satisfying Bianchi identities that allow us to identify them with M-theory fluxes on the 7-dimensional compactification. In particular this leads to a global classification in terms of \v Cech cohomology. This is similar to the classification of Courant algebroids in terms of $H^3(M)$, although the M-exact $E_7$ case is more subtle due to multiple fluxes appearing with nested Bianchi identities.
    
    Several interesting questions follow from this work. Perhaps the most obvious is how to extend this construction to the $E_8$ case, which no longer fits in the framework of Leibniz algebroids \cite{HS-E8,CR-E8}, or indeed, to try and push to the infinite-dimensional cases such as $E_9$ \cite{BCKPS}. Going beyond the $E_{n}$ and $O(p,n-p)$ series, it would be interesting to understand the space of Y-algebroids more generally. In the string and M-theory cases, the relevant Y-algebroids seem to always be of class $(R,y)$, such that the $Y$-tensor ``looks the same'' everywhere on $M$. What is the physics of Y-algebroids that go beyond this?  An interesting observation here is that in the string and M-theory cases, the condition \eqref{eq:implies} always holds. This is an important property, used in the algebraic characterisation of consistent truncations, suggesting that in general one may wish to restrict to only those Y-algebroids satisfying \eqref{eq:implies}.

    There are many other questions. What is the global structure of the groups $\mc G(\mc U)$ that arise in the global classification? How do we calculate $\check H(M,\mc G)$, e.g.\ can we express it in terms of ordinary cohomology classes on the base manifold $M$, as in the analogous Courant case \cite{let,Severa2}? We described how our framework captures Poisson--Lie U-duality and exceptional complex structures. Can this be used to find new Poisson--Lie U-dual pairs, of which currently only few examples are known \cite{BTZ,BZ,BZ2}? Similarly, can our framework be used to gain insights into the global structure of exceptional complex structures? We leave these questions for future work.

\subsection*{Acknowledgements}
    The authors would like to thank Anthony Ashmore for a helpful discussion and the organisers of the ``Supergravity, Generalized Geometry and Ricci Flow'' programme at the Simons Center for Geometry and Physics for hospitality while this work was being completed. The authors also thank the anonymous referee for their valuable comments and suggestions which helped to improve the work. E.\,M.\ also thanks Imperial College London for hospitality. O.\,H.\ was supported by the FWO-Vlaanderen through the project G006119N and by the Vrije Universiteit Brussel through the Strategic Research Program ``High-Energy Physics''. E.\,M.\ was supported by the Deutsche Forschungsgemeinschaft (DFG, German Research Foundation) via the Emmy Noether program ``Exploring the landscape of string theory flux vacua using exceptional field theory'' (project number 426510644). F.\,V.\ was supported by the Postdoc Mobility grant P500PT\underline{\phantom{k}}203123 of the Swiss National Science Foundation. D.\,W.\ was supported in part by the STFC Consolidated Grant ST/T000791/1 and the EPSRC New Horizons Grant EP/V049089/1.

\end{document}